\documentclass{article}
\usepackage{spconf,amsmath,graphicx,hyperref}

\usepackage[utf8]{inputenc} 
\usepackage[T1]{fontenc}    
\usepackage{hyperref}       
\usepackage{url}            
\usepackage{booktabs}       
\usepackage{amsfonts}       
\usepackage{nicefrac}       
\usepackage{microtype}      
\usepackage{xcolor}         
\usepackage{lipsum}
\usepackage{amsmath, amssymb, amsthm}
\usepackage{thmtools}
\usepackage{thm-restate}
\usepackage{graphicx} 
\usepackage{cite}
\usepackage{enumitem}
\usepackage{adjustbox}
\usepackage[utf8]{inputenc} 
\usepackage[T1]{fontenc}    
\usepackage{hyperref}       
\usepackage{url}            
\usepackage{booktabs}       
\usepackage{amsfonts}       
\usepackage{nicefrac}       
\usepackage{microtype}      
\usepackage{enumitem}       
\usepackage{amsmath}        
\usepackage{amsthm}         
\usepackage{bbm}            
\usepackage{xcolor}         
\usepackage{soul}           
\usepackage[ruled,vlined]{algorithm2e}    
\usepackage{graphicx}       
\usepackage{subcaption}      
\usepackage{booktabs}       
\usepackage{multirow}       
\usepackage{wrapfig}        

\usepackage[ruled,vlined]{algorithm2e}    

\let\OLDthebibliography\thebibliography
\renewcommand\thebibliography[1]{
  \OLDthebibliography{#1}
  \setlength{\parskip}{1pt}
  \setlength{\itemsep}{1pt plus 0.3ex}
}

\allowdisplaybreaks

\allowdisplaybreaks

\newtheorem{definition}{Definition}[section]
\newtheorem{proposition}{Proposition}
\newtheorem{assumption}{Assumption}
\newtheorem{lemma}{Lemma}

\title{Identifying common backbones of interactions underlying \\ food webs via non-deterministic alignments}
\name{Yifan Xu$^{*}$, Carlos Taveras$^{*}$, Lydia Beaudrot$^{\dagger}$, C\'esar A. Uribe$^{*}$}
\address{$^{*}$Rice University, Houston, TX 77005, USA\\
$^{\dagger}$Michigan State University, East Lansing, MI 48824, USA}
\begin{document}
%
\maketitle
\begin{abstract}
Climate change reshapes food webs by altering species distributions and interactions, making it essential to identify structural backbones of interactions that persist across ecosystems. Deterministic alignment methods are computationally slow and restricted to one-to-one correspondences. We introduce a scalable, non-deterministic alignment framework inspired by optimal transport that captures overlapping species roles via many-to-many mappings. Framed via motif-role profiles as a Gromov–Wasserstein transport problem, our method is both efficient and interpretable. We apply the proposed method to a large continental-scale data set of 129 mammal food webs in Sub-Saharan Africa. Pairwise alignments are identified, and we uncover robust backbones with greater connectivity and transitivity than those expected under null models. The proposed approach provides a formal, reproducible tool for forecasting ecosystem reorganization and conservation efforts.
\end{abstract}
\begin{keywords}
Graph alignment, ecological networks, optimal transport
\end{keywords}
\section{Introduction}
\label{sec:intro}

Food webs are directed networks that model predator-prey interactions and map the flow of energy through ecosystems. \textit{Climate change alters these networks} by reshaping species distributions, modifying interaction strengths, and, in some cases, dismantling trophic pathways altogether~\cite{lurgi2012climate}. Such perturbations hinder predictions of ecosystem responses, as changes in interaction structure can cascade to affect biodiversity and ecosystem services \cite{oliver2015biodiversity}. This makes food web analysis indispensable for understanding the stability, functioning, and resilience of ecological communities under environmental stressors~\cite{oliver2015biodiversity,thebault2010stability,barnes2018energy}.  In this context, identifying structural patterns that persist across food webs, referred to as the \textbf{backbone of interactions}, becomes crucial~\cite{bramon2018identifying}. These \textit{backbones} highlight recurrent trophic links that endure despite climate-driven variability, revealing ecological regularities that constrain community assembly. By isolating these persistent features, food web analysis not only deepens our theoretical understanding of ecosystem organization but also provides a framework for forecasting how ecosystems might reorganize under future climate scenarios.

\textit{Backbones of interactions} are typically identified through comparative analyses of ecological networks across communities, ecosystems, or large datasets. Traditionally, such comparisons have focused on network-level descriptors, e.g., connectance, modularity, and trophic-level distributions, that capture broad structural features of food webs \cite{dunne2002network, williams2000simple, stouffer2011role}. While informative, these descriptors provide only coarse-grained descriptions and often fail to capture species-level similarities or functional roles. To overcome this limitation, recent approaches emphasize direct species alignments based on their structural positions within networks. A pioneering effort by \cite{bramon2018identifying} introduced a deterministic network alignment framework that maps species one-to-one across food webs, identifying analogous roles through motif-role similarity \cite{milo2002network, baiser2016motifs, cenci2018rethinking, paulau2015motif, tavella2022using}. Despite its fine-grained perspective, this deterministic strategy has drawbacks that hinder its wide-scale applicability. Specifically, it scales poorly as food web size increases and enforces the restrictive assumption of strict one-to-one role correspondence, potentially missing cases where ecological roles overlap or only partially align~\cite{stouffer2012evolutionary}. Moreover, the reliance on annealing-based optimization adds technical complexity, making alignments difficult to reproduce and computationally expensive for statistically significant results, limiting applicability to large-scale food webs.

We introduce a \textbf{non-deterministic} alignment framework, coupled with an optimization-based method, to identify the backbones of interactions in food webs. Analogous to the relaxation introduced by Kantorovich to the classic Monge's Optimal Transport (OT) problem \cite{ villani2008optimal,kantorovich1942translocation}, we relax the deterministic pairings into many-to-many alignments. In summary, our contributions are as follows:
\begin{enumerate}[ leftmargin=2.0em,left=-1pt,itemsep=1pt, parsep=-1pt, topsep=-0pt, partopsep=-1pt]
    \item We provide a first mathematically rigorous formulation of ecological network alignment based on network motifs. 
    \item We propose a computationally efficient, scalable, non-deterministic alignment framework and algorithm.
    \item We introduce the notion of backbone of interactions for non-deterministic alignments and show they satisfy the criteria provided by \cite{bramon2018identifying}.
\end{enumerate}

\section{Computing Non-deterministic Motif-based  Network Alignments}

Aligning two ecological networks amounts to identifying correspondences between their species based on the ecological roles they play within their respective networks. Formally, our task is to find an \textit{alignment} (defined below) that minimizes a cost functional quantifying the aggregated dissimilarity between species across two food webs.

\vspace{-0.2cm}
\begin{definition}\label{def:non-det alignment}
Let $G_1 = (V_1,E_1)$ and $G_2=(V_2,E_2)$ denote two directed networks where $m=|V_1|$ and $n=|V_2|$. Moreover, let ${\mu} \in \Delta^{m - 1}, {\nu} \in \Delta^{n - 1}$ be two distributions representing species importance. A \textbf{non-deterministic alignment} between $G_1,G_2$ is a matrix $T \in \mathcal{C}_1({\mu}) \cap \mathcal{C}_2({\nu})$, where

\vspace{-0.7cm}
\begin{align} \label{eq:marginal_constraints}
\mathcal{C}_1(\mu) &:= \{T \in [0, 1]^{m \times n}  \; | \; T\mathbbm{1}_{n} \preceq \mu\}, \quad \text{and} \nonumber \\ \quad \mathcal{C}_2(\nu)& := \{T \in [0, 1]^{m \times n} \; | \; T^\top \mathbbm{1}_m \preceq \nu\},
\end{align}
\vspace{-0.7cm}

\noindent with $T_{ij}$ representing the alignment between species $i$ and $j$.
\end{definition}

\vspace{-0.2cm}
Next, we build the components to quantify the quality of a non-deterministic alignment. Specifically, we will provide a novel formalization of the \textit{motif-based} alignment approach introduced in~\cite{bramon2018identifying} and will uncover its geometric structure by showing that the task is analogous to a Gromov-Wasserstein \cite{Memoli2011} transport plan computation. 

\textit{Network motifs} are small subnetworks that enumerate all possible interaction patterns among species. They are widely used to capture \emph{local structural information} in ecological networks, including food webs, mutualistic systems, and multi-trophic interaction networks \cite{milo2002network, baiser2016motifs, tavella2022using, cenci2018rethinking, paulau2015motif}. In ecological applications, motif structure has been linked to interpretable functional roles (e.g., basal producers, intermediate consumers, top predators) and to properties such as stability, robustness, and energy flow. Motif-role profiles quantify structural dissimilarity in species’ ecological roles across networks by recording how often each species occupies distinct topological positions~\cite{stouffer2012evolutionary}. 

Given a species $s$ in a food web $G$, its \emph{motif-role profile} is a vector $m(s) \in \mathbb{R}^{p}$ whose $i$-th entry counts how many times $s$ appears in role $i$ across all occurrences of directed $k$-node motifs in $G$. For example, there are 13 non-isomorphic 3-node motifs, which yield $p=30$ unique roles \cite{mora2018pymfinder}. Given two species $i, j$ from possibly different webs, their dissimilarity is computed as $d\big(m(i),m(j)\big)$, for some discrepancy $d: \mathbb{R}^{p} \times \mathbb{R}^{p} \rightarrow \mathbb{R}_{\geq0}$, e.g.,  $d(x, y) := 1 - \rho(x, y)$ where $\rho(x, y)$ represents the Pearson correlation coefficient between $x$ and $y$. In the context of network alignment between $G_1$ and $G_2$, we compute the dissimilarities between every pair of species from the two networks, yielding the cost matrix $C_{ij} = d(m(i), m(j))$ for the alignment formulations.

Our objective is to find an optimal non-deterministic alignment between graphs $G_1$ and $G_2$ by solving the following optimization problem 

\vspace{-0.6cm}
\begin{align}\label{eq:optimal non-det alignment}
    \min_{T \in [0, 1]^{m \times n}}  &
    \ \alpha \langle T, A_1(C \odot T)A_2 \rangle 
    {+} (1{-}\alpha)\langle C, T \rangle 
    {-} \epsilon \langle T, {\mu} {\nu}^\top \rangle \nonumber \\ & \quad \text{s.t. }T \in \mathcal{C}_1({\mu}) \cap \mathcal{C}_2({\nu}).
\end{align}

\noindent where  $\alpha \in [0,1]$ is a tradeoff parameter between first-order and second-order alignments, $\epsilon > 0$ self-alignment penalty, and $A_1, A_2$ are the adjacency matrices of the underlying undirected graphs of $G_1, G_2$, respectively.

Similar to the Fused Gromov-Wasserstein approach introduced in \cite{vayer2020, kai2025},  $\alpha=0$ emphasizes direct role matching similar to the Wasserstein distance, while $\alpha=1$ emphasizes neighbor-role consistency similar to the Gromov-Wasserstein approach. Moreover,  $\epsilon > 0$ controls when no alignment should be preferred over a weak alignment. 

We propose a KL proximal point method with a Dykstra subroutine to enforce constraints and compute approximately optimal non-deterministic alignments as described in the following iterations:

\vspace{-0.2cm}
    \begin{align} \label{eq:update}
        \hat{T}^{(k)} &{=} P_{\mathcal{C}_1} \big(T^{(k)} \odot \exp( {-} \gamma^{-1}Q^{(k)})),  \nonumber \\
         \quad T^{(k+1)} & {=} P_{\mathcal{C}_2} \big(\hat{T}^{(k)} \odot \exp( {-} \gamma^{-1}Q^{'(k)})), \\
        Q^{(k)}& {:=} \alpha A_1(C {\odot} T^{(k)})A_2 {+} \frac{1}{2}(1 {-} \alpha)C {-} \frac{1}{2}\epsilon \mu\nu^\top,  \nonumber \\ \quad Q^{'(k)} & {:=} \alpha\, C {\odot} (A_1 \hat{T}^{(k)} A_2) {+} \frac{1}{2}(1 {-} \alpha)C {-} \frac{1}{2}\epsilon\,\mu\nu^\top. \nonumber
    \end{align}

\begin{assumption}\label{assum:aae}
    The accumulative asymmetrical error (AAE) of the iterates generated by Eq.~\eqref{eq:update} is bounded, i.e., 

    \vspace{-0.2cm}
    \begin{align*}
        \sum_{k=0}^\infty \big(D_h(\pi^{(k+1)}, w^{(k)}) - D_h(w^{(k)}, \pi^{(k+1)})\big) < \infty,
    \end{align*}
    \vspace{-0.2cm}
    
    \noindent where $D_h(x,y)$ is the Bregman divergence between $x$ and $y$ associated with a strictly convex function $h$.
\end{assumption}

Assumption~\ref{assum:aae} has been shown to hold for $h$ quadratic, and has been empirically verified for  $h$ being the relative entropy \cite[Figure 3]{li2023convergent}. 

\begin{proposition} \label{prop:convergence}
Let $G_1$, $G_2$ be two graphs with adjacency matrices $A_1, A_2$, $C$ a cost matrix, $\alpha\in[0,1]$, and $\epsilon>0$, and $T^{(0)}=\frac{1}{mn}\mathbbm{1}_m\mathbbm{1}_n^\top$. Let Assumption~\ref{assum:aae} hold. Then, the iterates generated by Eq.~\eqref{eq:update} converge to a stationary point of~\eqref{eq:optimal non-det alignment}
\end{proposition}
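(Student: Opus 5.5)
The plan is to recognize the iteration in Eq.~\eqref{eq:update} as an instance of the \emph{alternating Bregman proximal} scheme studied in \cite{li2023convergent}. Then Proposition~\ref{prop:convergence} follows from their convergence theorem once its hypotheses are checked, with Assumption~\ref{assum:aae} supplying the one hypothesis that cannot be verified a priori. Write the objective of~\eqref{eq:optimal non-det alignment} as $f(T)=\alpha\langle T, A_1(C\odot T)A_2\rangle+(1-\alpha)\langle C,T\rangle-\epsilon\langle T,\mu\nu^\top\rangle$. We split it as $f = f_1 + f_2$ with
\[
f_1(T)=\tfrac12\bigl[(1-\alpha)\langle C,T\rangle-\epsilon\langle T,\mu\nu^\top\rangle\bigr]+\iota_{\mathcal C_1(\mu)}(T),
\qquad
f_2(T)=f(T)-f_1(T)+\iota_{\mathcal C_2(\nu)}(T).
\]
Equivalently, we introduce the lifted variables $(\pi,w)$ with a coupling $\pi=w$, which is what the notation $\pi^{(k+1)}, w^{(k)}$ in Assumption~\ref{assum:aae} refers to. We take $h(T)=\sum_{ij}T_{ij}\log T_{ij}-T_{ij}$, the entropy.

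First, verify that each half-step is exactly a KL-proximal (mirror) step on a linearization. The gradient of the bilinear term $\langle T, A_1(C\odot T)A_2\rangle$ is $A_1(C\odot T)A_2 + C\odot(A_1^\top T A_2^\top)$; since $A_1,A_2$ are symmetric, this is $A_1(C\odot T)A_2+C\odot(A_1TA_2)$. Assigning the first summand to the first half-step and the second to the second gives $Q^{(k)}$ and $Q'^{(k)}$. The entropic mirror step $\arg\min_{T\in\mathcal C}\langle Q,T\rangle+\gamma D_h(T,T^{(k)})$ equals the KL projection $P_{\mathcal C}(T^{(k)}\odot e^{-Q/\gamma})$. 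Dykstra's subroutine computes this projection exactly onto the polyhedral sets $\mathcal C_i\cap[0,1]^{m\times n}$. So Eq.~\eqref{eq:update} coincides with the scheme of \cite{li2023convergent}, with $\pi^{(k+1)}=T^{(k+1)}$ and $w^{(k)}=\hat T^{(k)}$.

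Second, check the structural hypotheses. The feasible set $\mathcal C_1(\mu)\cap\mathcal C_2(\nu)$ is compact, convex, and nonempty; it contains $0$ and the initialization $T^{(0)}$, which is strictly positive. The objective $f$ is a quadratic polynomial, hence $L$-smooth relative to $h$ on the bounded set, with $L$ bounded by something like $\alpha\|C\|_\infty\|A_1\|\|A_2\|$. It is semialgebraic, hence satisfies the Kurdyka--Łojasiewicz property. The multiplicative updates keep all iterates in the relative interior of the positive orthant, so $h$ is differentiable along the trajectory. With $\gamma$ chosen larger than $L$, we obtain a sufficient-decrease inequality of the form
\[
f(T^{(k+1)})+\gamma D_h(T^{(k+1)},T^{(k)})\le f(T^{(k)})+\gamma\bigl[D_h(\pi^{(k+1)},w^{(k)})-D_h(w^{(k)},\pi^{(k+1)})\bigr].
\]
The bracket is the asymmetry error. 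Summing over $k$ and using Assumption~\ref{assum:aae} together with boundedness of $f$ from below gives $\sum_k D_h(T^{(k+1)},T^{(k)})<\infty$. Then KL and boundedness yield finite length and convergence of the whole sequence. Any limit point satisfies the first-order optimality condition $0\in\nabla f(T^\star)+N_{\mathcal C_1\cap\mathcal C_2}(T^\star)$, that is, it is a stationary point.

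The main obstacle is the descent inequality. Because the two half-steps use different linearizations and different constraint sets, a KL-divergence asymmetry term appears that does not telescope. This is precisely why Assumption~\ref{assum:aae} is needed, and the care lies in matching the paper's splitting to the $(\pi,w)$ variables of \cite{li2023convergent}. A secondary subtlety is the boundary: iterates may approach zero entries where $\nabla h$ blows up. I would handle this via the KL property applied to the limit, as in \cite{li2023convergent}, rather than with uniform interiority bounds.
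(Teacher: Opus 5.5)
Your high-level route is the paper's. You recognize Eq.~\eqref{eq:update} as the Bregman alternating projected gradient (BAPG) scheme of \cite{li2023convergent} on the split problem in $(\pi,w)$, check its hypotheses, and let Assumption~\ref{assum:aae} supply the rest. The paper checks exactly three things: the coupling $\alpha\langle \pi, A_1(C\odot w)A_2\rangle$ is bilinear, $\mathcal C_1,\mathcal C_2$ are closed convex polytopes, and $F_\gamma$ is coercive. It then cites \cite[Theorem 3.6]{li2023convergent}.

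The descent argument you build on top of this is not the right one.
\begin{itemize}
\item Your identification of variables is reversed. $\hat T^{(k)}\in\mathcal C_1$ is $\pi^{(k+1)}$, and $T^{(k)}\in\mathcal C_2$ is $w^{(k)}$. This matters because Assumption~\ref{assum:aae} is a statement about specific pairs.
\item More importantly, $f(T^{(k)})$ is not a Lyapunov function. Each half-step enforces only one constraint set, so $T^{(k)}$ is only guaranteed to lie in $\mathcal C_2$. The quantity that (nearly) decreases is $F_\gamma(\pi^{(k)},w^{(k)})$, where $F_\gamma(\pi,w)=g(\pi,w)+\mathbb{I}_{\mathcal C_1}(\pi)+\mathbb{I}_{\mathcal C_2}(w)+\gamma D_h(\pi,w)$.
\item No condition $\gamma>L$ is needed, and the proposition imposes none. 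Since $g$ is bilinear, the first half-step is exactly $\arg\min_\pi F_\gamma(\pi,w^{(k)})$. The second half-step is exactly block minimization of $F_\gamma(\pi^{(k+1)},\cdot)$, except that the divergence enters as $D_h(w,\pi^{(k+1)})$ rather than $D_h(\pi^{(k+1)},w)$. That argument swap is the only error term, and it is what the AAE controls. This is why the paper verifies bilinearity rather than relative smoothness.
\end{itemize}

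Your last two steps also overreach.
\begin{itemize}
\item With the AAE perturbation, the potential values need not be monotone, so the standard KL finite-length argument does not apply verbatim. Also, $F_\gamma$ contains entropy terms, so you would need definability in $\mathbb{R}_{\exp}$, not semialgebraicity. The cited theorem only gives that every limit point is a fixed point of BAPG.
\item A fixed point $(\pi^\star,w^\star)$ is a critical point of the penalized split problem. Adding the two block optimality conditions yields $0\in\nabla f(T^\star)+N_{\mathcal C_1\cap\mathcal C_2}(T^\star)$ when $\pi^\star=w^\star$. Nothing forces $\pi^\star=w^\star$ at finite $\gamma$.
\end{itemize}
The paper's proof makes the same two jumps: from limit points to convergence of the whole sequence, and from ``stationary point of the split problem'' to ``therefore a local minimizer'' of the original. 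To reach the stated conclusion you must either establish $\pi^\star=w^\star$ or state the result for the split problem.

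Minor: $T^{(0)}$ lies in $\mathcal C_1(\mu)\cap\mathcal C_2(\nu)$ only when $\mu,\nu$ are uniform. You only need its positivity.
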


Deterministic alignments arise as a special case with~$T_{ij}\in\{0,1\}$, and $\mu$ and $\nu$ are uniform. Our next theoretical result formalizes the heuristic developed in~\cite{bramon2018identifying} for the alignment computation and shows its geometric structure as an optimization problem with a Gromov-Wasserstein-like cost functional.

\begin{proposition}\label{prop:det2}
Let $G_1$, $G_2$ be two graphs with adjacency matrices $A_1, A_2$ respectively, $C$ a cost matrix, then an alignment 

\vspace{-0,2cm}
  \begin{align}\label{eq:optimal det alignment}
   \hat{T}^* & = \arg\min_{T \in \{0,1\}^{m \times n}} 
    \  \langle T, A_1(C \odot T)A_2 \rangle 
       + \langle T, \Xi(T)\rangle, \nonumber \\ & \text{s.t. }T \in \mathcal{C}_1(\mathbbm{1}_m) \cap \mathcal{C}_2(\mathbbm{1}_n), \ \quad \text{where} \\
        \Xi(T) & = (1{-}\epsilon)\left(\max \left(A_1 \mathbbm{1}_m \mathbbm{1}_n^\top, \mathbbm{1}_m \mathbbm{1}_n^\top A_2\right) - A_1 T A_2\right) . \nonumber
\end{align}
    is optimal if and only if it minimizes \cite[Eq. (3)]{bramon2018identifying}.
\end{proposition}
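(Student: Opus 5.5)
The plan is to write the objective of \cite[Eq. (3)]{bramon2018identifying} in matrix form and show that, on the feasible set $\{0,1\}^{m\times n}\cap\mathcal{C}_1(\mathbbm{1}_m)\cap\mathcal{C}_2(\mathbbm{1}_n)$ (partial one-to-one matchings), it agrees with the functional in \eqref{eq:optimal det alignment}, either exactly or up to an additive constant and a positive scaling. An identity of that kind preserves the set of minimizers, which gives the ``if and only if'' at once. First I would recall the objective of \cite{bramon2018identifying}: for a matching $\sigma$ it has two parts. One is a sum over aligned pairs of links $(i,i')\in E_1$, $(j,j')\in E_2$ with $\sigma(i)=j$, $\sigma(i')=j'$, weighted by the role dissimilarity $C_{ij}$. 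The other is a penalty, weighted by $(1-\epsilon)$, that counts links left unaligned, normalised by the larger of the two degrees of the species involved.

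The second step is to encode the matching as $T_{ij}=\mathbbm{1}[\sigma(i)=j]$. By the marginal constraints each row and each column of $T$ holds at most one $1$, so $T$ is a partial permutation matrix. Then
\[
\langle T, A_1(C\odot T)A_2\rangle=\sum_{i,j}T_{ij}\sum_{i',j'}(A_1)_{ii'}C_{i'j'}T_{i'j'}(A_2)_{j'j}.
\]
This counts, for every aligned pair $(i,j)$, the neighbours $i'$ of $i$ that are aligned to neighbours $j'$ of $j$, each weighted by $C_{i'j'}$. That should match the first part of Eq.~(3). I use that $A_1,A_2$ are symmetric, being adjacency matrices of the underlying undirected graphs, so that $(A_2)_{j'j}=(A_2)_{jj'}$. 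I would also check whether Eq.~(3) attaches the cost to $i'$ or to $i$; by symmetry of the double sum the two can be swapped.

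The third step handles the penalty. For an aligned pair $(i,j)$, $(A_1TA_2)_{ij}$ is the number of conserved links incident to the pair. The quantity $\max\big((A_1\mathbbm{1}\mathbbm{1}^\top)_{ij},(\mathbbm{1}\mathbbm{1}^\top A_2)_{ij}\big)=\max(\deg_1 i,\deg_2 j)$ is the maximal number of links that could be conserved there. So $\langle T,\Xi(T)\rangle=(1-\epsilon)\sum_{\sigma(i)=j}\big(\max(\deg i,\deg j)-\#\text{conserved}\big)$, which is the count of unconserved links. I would read the $\max$ entrywise and note that $\mathbbm{1}_m\mathbbm{1}_n^\top A_2$ must be interpreted as $\mathbbm{1}_m(A_2\mathbbm{1}_n)^\top$ for the dimensions to match.

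The main obstacle is the exact correspondence of normalisations and constants with Eq.~(3): degree normalisation, double counting of undirected edges, and how unmatched species are penalised. If the two objectives differ by a factor $2$ or by a term that depends only on the graphs (e.g. $|E_1|+|E_2|$), the argmin is unaffected, and I would state this explicitly. If instead Eq.~(3) penalises unmatched nodes in a way that depends on $T$, I would absorb this through the $\epsilon$-weighted terms. As a last resort, I would restrict to full matchings, where $\sum_{ij}T_{ij}$ is constant.
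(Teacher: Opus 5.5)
Your plan is the paper's proof. You encode the matching as a partial permutation matrix and read $\langle T, A_1(C\odot T)A_2\rangle$ as $\sum_{x\in\lambda(T)}\sum_{(\alpha,\beta)\in\lambda_x}C_{\alpha\beta}$; your symmetry remark correctly shows it does not matter whether the cost sits on $x$ or on its neighbour pairs. You then use $(A_1TA_2)_{ij}=|\lambda_x|$ to turn $\langle T,\Xi(T)\rangle$ into $\sum_{x}(1-\epsilon)\max(k_x^\alpha,k_x^\beta)=\sum_x\xi_x$.

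The identity is exact, so your fallbacks (constants, rescaling, full matchings) are not needed. The quantity $\max(\deg i,\deg j)-|\lambda_x|$ is the unnormalised number of unmatched neighbours on the higher-degree side. That is precisely the $\xi_x$ term of \cite[Eq.~(3)]{bramon2018identifying} as the paper reads it. It is neither the degree-normalised quantity nor the total count of unconserved links that your summary of Eq.~(3) describes.
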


We examine the various properties of deterministic and non-deterministic alignments in \cite[Appendix~A.2]{xu2025foodwebs}. Moreover, algorithmic details and proofs for Proposition~\ref{prop:convergence} and \ref{prop:det2} can be found in \cite[Appendices~A.1, B.1., B.2.]{xu2025foodwebs}.


\section{Non-deterministic  Backbones} 

Consider a set of networks $\mathcal{D} = \{G_1, G_2, \cdots, G_N\}$, the pairwise non-deterministic alignments between them $\{T^{ij}\}_{i,j=1}^N$ and the pairwise dissimilarity matrices between them $\{C^{ij}\}_{i,j=1}^N$. Moreover, let $[N]$ denote the set $\{1, \cdots, N\}$ for all positive integers up to $N$.

\begin{definition}[Non-deterministic role similarity score] \label{def:role-sim}
    Given a graph $G_i$ for some $i \in [N]$ and an ordering of its species, the role similarity score of species $k$ in $V_i$ is defined as

    \vspace{-6mm}
    \begin{align*}
        \sum_{p=1}^N \sum_{q=1}^{|V_p|} (1 - C^{ip}_{jq} )T_{jq}^{ip}.
    \end{align*}
    \vspace{-6mm}
\end{definition}

A species' role similarity score is high when: 1) a small amount of its mass is self-aligned, and 2)  the alignment incident to this species generates low direct-matching cost. Under this definition, a species with a high role similarity score is said to be \textbf{well-aligned} with respect to the dataset $\mathcal{D}$.

\begin{definition}[Top-k backbone of interactions]\label{def:top_k}
    Given a graph $G_i$ and the role similarity scores of its species, for an integer $k \geq 2$, we define the top-$k$ backbone of $G_i$, denoted as $B_i = (V_{B_i}, E_{B_i})$, to be the subgraph induced by the $k$ species with the highest role similarity scores.
\end{definition}
\begin{definition}[Non-deterministic transitivity score]\label{def:transitivity}
    For each species in $V_{B_i}$, which is the $j$-th species in $V_i$, its non-deterministic transitivity score is:

    \vspace{-6mm}
    \begin{align*}
    \text{Transitivity}(j) = \frac{\displaystyle \sum_{\substack{p, q \in [N] \\ p < q; p,q \neq i}}
    \displaystyle \sum_{j_p \in V_p} \sum_{j_q \in V_q} T^{ip}_{j j_p} \cdot T^{iq}_{j j_q} \cdot T^{pq}_{j_p j_q}
    }{
    \displaystyle \sum_{\substack{p, q \in [N] \\ p < q; p,q \neq i}}\sum_{j_p \in V_p} \sum_{j_q \in V_q} T^{ip}_{j j_p} \cdot T^{iq}_{j j_q}
    }.
    \end{align*}
\vspace{-4mm}
    
    The transitivity score of a food web is the average among its backbone species.
\end{definition}

Intuitively, for each species $j$ in the backbone $B_i$, we examine its alignment to species $j_p$ in $G_p$ and $j_q$ in $G_q$, across all pairs of networks $G_p$, $G_q$ not equal to $G_i$. The numerator accumulates these alignment strengths weighted by how strongly $j_p$ and $j_q$ are themselves aligned (as measured by $T^{pq}_{j_p j_q}$), representing how well the species alignments form a transitive triangle. The denominator serves as a normalization factor, ensuring the score lies between 0 and 1, by aggregating all alignment-strength products $T^{ip}_{j j_p} \cdot T^{iq}_{j j_q}$ regardless of whether $j_p$ and $j_q$ align.

\section{Numerical analysis on Sub-Saharan Africa mammal food webs}

\vspace{-0.1cm}



\begin{figure}[!t]
  \centering
  \setlength{\tabcolsep}{6pt} 
  \renewcommand{\arraystretch}{0.5}

  \newcommand{\cellw}{0.28\linewidth}

  \begin{tabular}{@{}ccccc@{}}
 
      \textbf{Ours, $\alpha=0$} &  \textbf{Ours, $\alpha=0.5$} & \textbf{Ours, $\alpha = 1$}\\ 
      \multirow{-4}{*}{\rotatebox[origin=c]{90}{$\epsilon = 0$}} \includegraphics[width=\cellw]{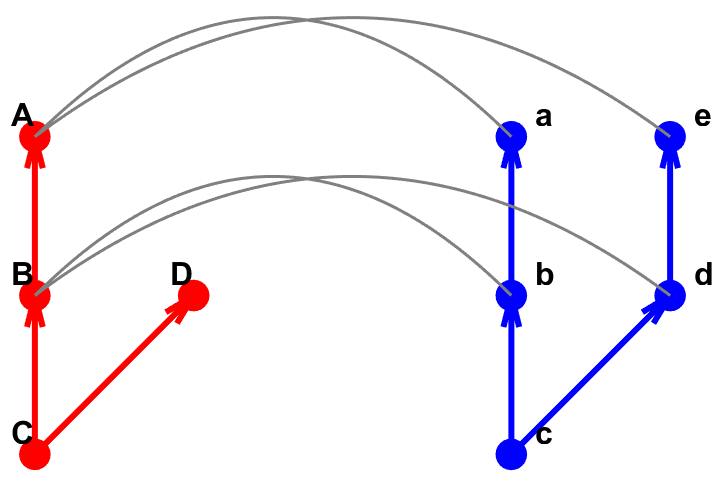}  
      & \includegraphics[width=\cellw]{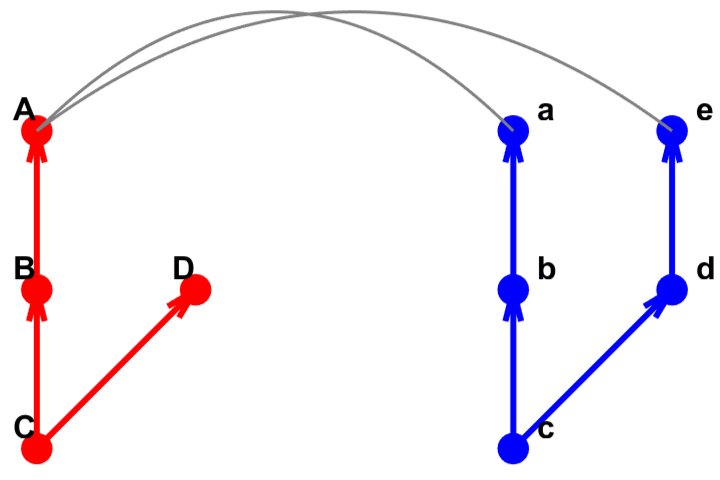} &
      \includegraphics[width=\cellw]{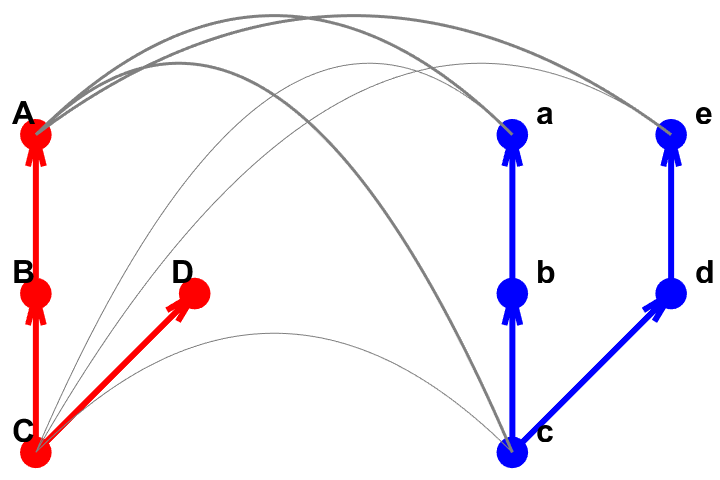} \\
 \multirow{-4}{*}{\rotatebox[origin=c]{90}{$\epsilon = 5$}}     \includegraphics[width=\cellw]{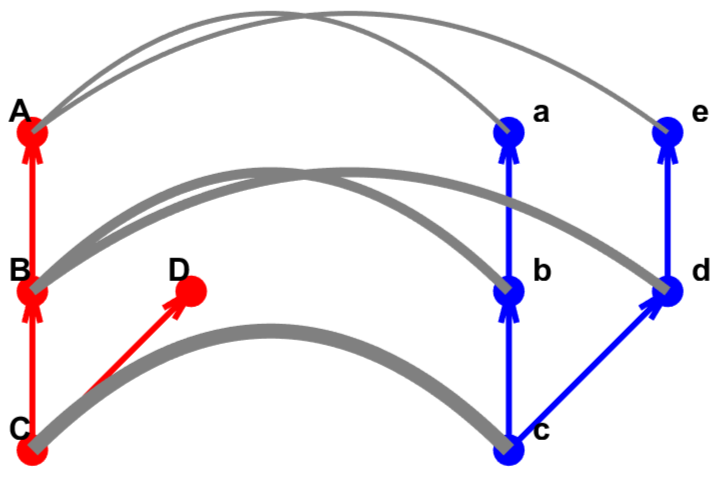} 
     & \includegraphics[width=\cellw]{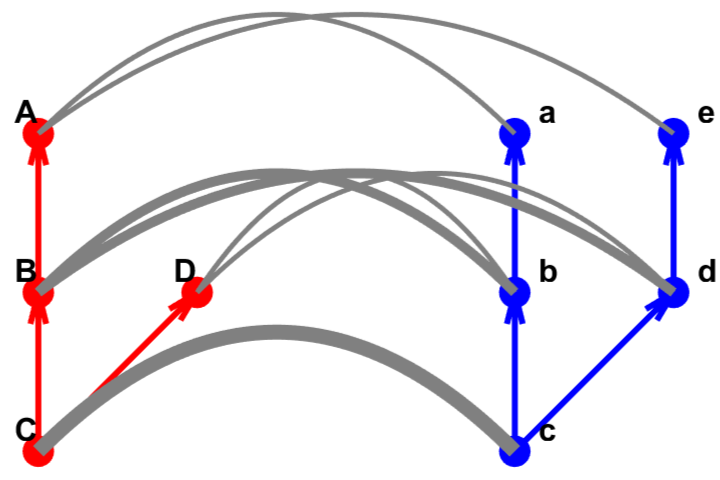} &
      \includegraphics[width=\cellw]{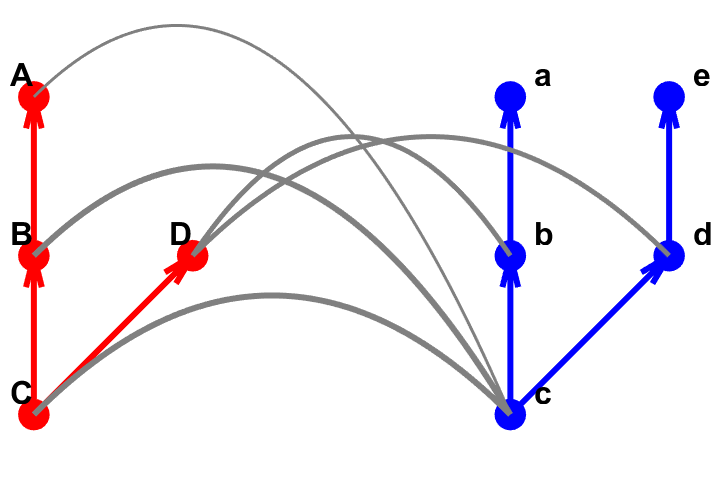} \\
  \end{tabular}

   \begin{tabular}{@{}cc@{}}
       \multicolumn{2}{c}{\textbf{Muritz~\cite{bramon2018identifying}}} \\
 \multirow{-4}{*}{\rotatebox[origin=c]{90}{$\epsilon = 0$}} \includegraphics[width=\cellw]{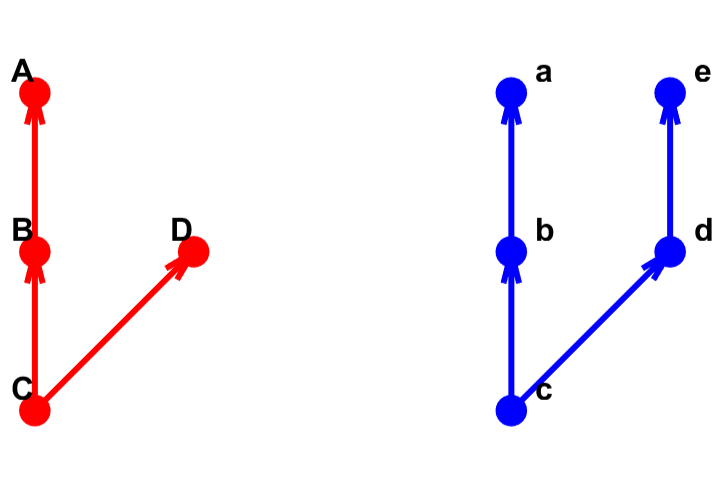} & 
\multirow{-4}{*}{\rotatebox[origin=c]{90}{$\epsilon = 5$}} \includegraphics[width=\cellw]{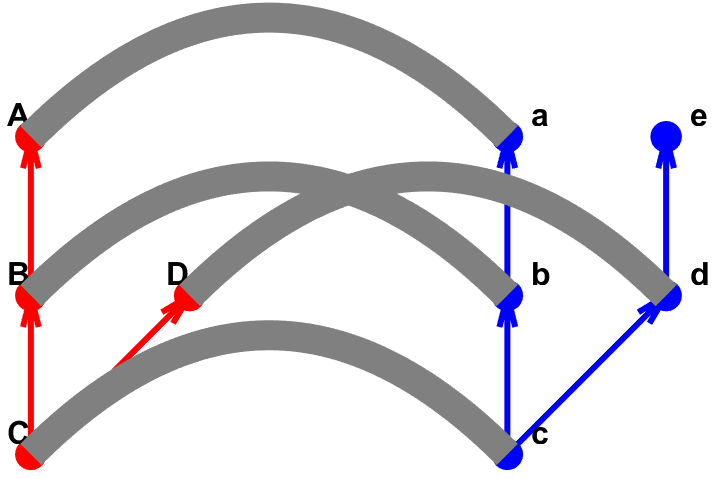}
      
  \end{tabular}
\vspace{-0.1cm}
  \caption{Optimal alignment between two toy networks with various alignment methods. Thicker lines indicate stronger alignment. Multiple deterministic alignments exist, so an arbitrary one is picked for both values of $\epsilon$.}
  \label{fig:two-by-four}
\end{figure}

We first evaluate our method on a pair of small synthetic networks, setting the species significance distributions $\mu$ and $\nu$ to be uniform. Figure \ref{fig:two-by-four} compares deterministic and non-deterministic alignments under varying values of the tradeoff parameter $\alpha$ and the self-alignment penalty $\epsilon$.
Species \textbf{A} in the red network shares an identical role profile to both species \textbf{a} and \textbf{e} in the blue network; likewise, Species \textbf{B} matches both species \textbf{b} and \textbf{d}. Deterministic alignment captures only a single pairing per species, while non-deterministic alignment identifies \textit{all} valid pairings and assigns each an alignment strength proportional to role similarity. This highlights the ability of non-deterministic alignment to detect structural redundancies.

\vspace{-0.1cm}
\begin{figure}[t]
  \centering
  \setlength{\tabcolsep}{6pt} 
  \renewcommand{\arraystretch}{0.1}

  \newcommand{\cellw}{0.7\linewidth}

  \begin{tabular}{@{}cc@{}}
      \subcaptionbox{ Deterministic alignment (Muritz~\cite{bramon2018identifying})\label{fig:hard_eatmap}}{\includegraphics[width=\cellw]{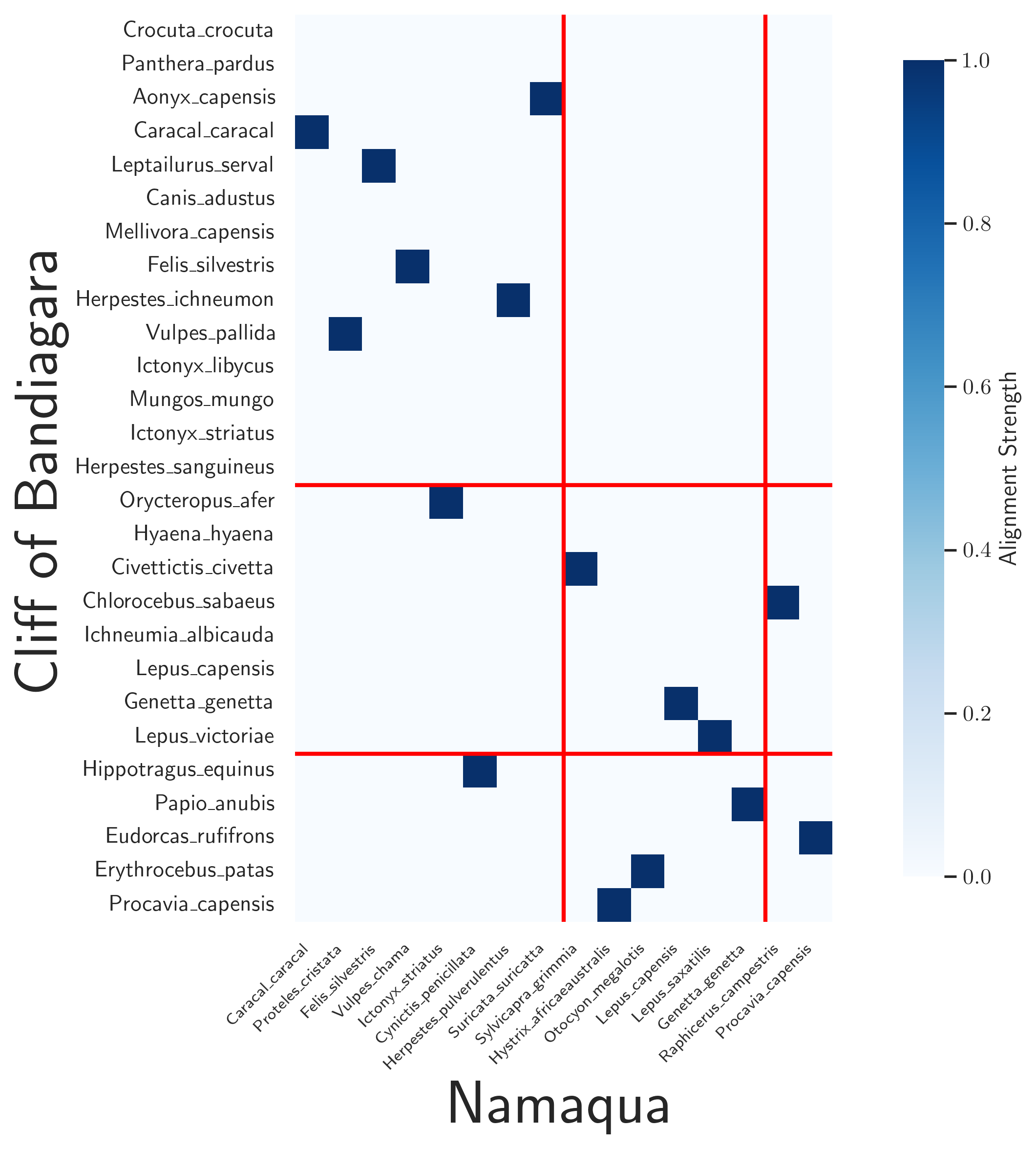}} \\
      \subcaptionbox{Non-deterministic alignment ($\alpha = 0.5$)\label{fig:soft_heatmap}}{\includegraphics[width=\cellw]{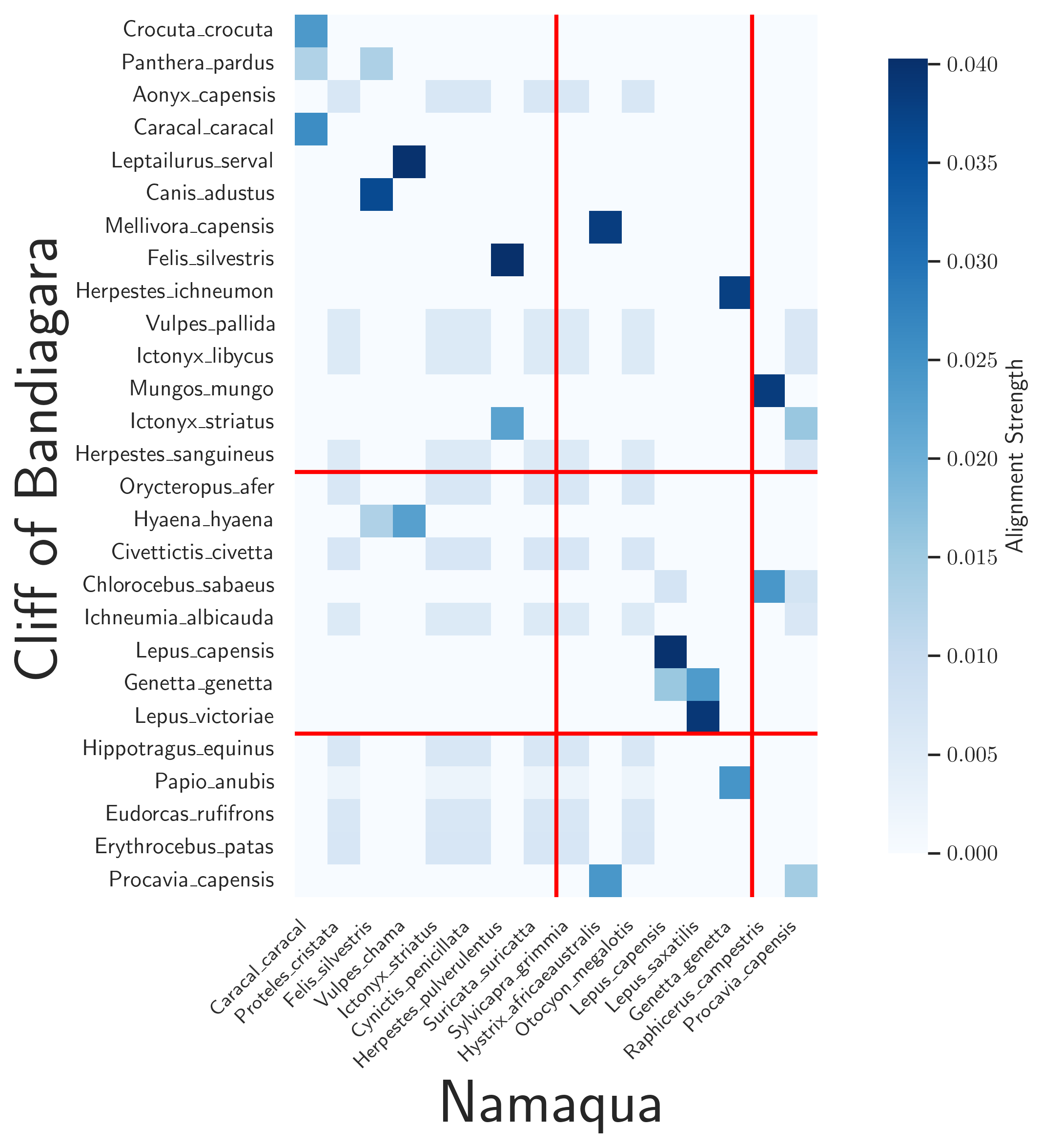}} 
  \end{tabular}
  \vspace{-0.2cm}
  \caption{Alignment heatmaps between the species in Cliff of Bandiagara and the species in Namaqua. Values in the heatmap represent the strength of alignment between species.}
  \label{fig:comparison}
  \vspace{-0.2cm}
\end{figure}

\begin{figure}[ht]
  \centering
  \begin{subfigure}[b]{0.23\textwidth}
    \centering
    \includegraphics[width=1\linewidth]{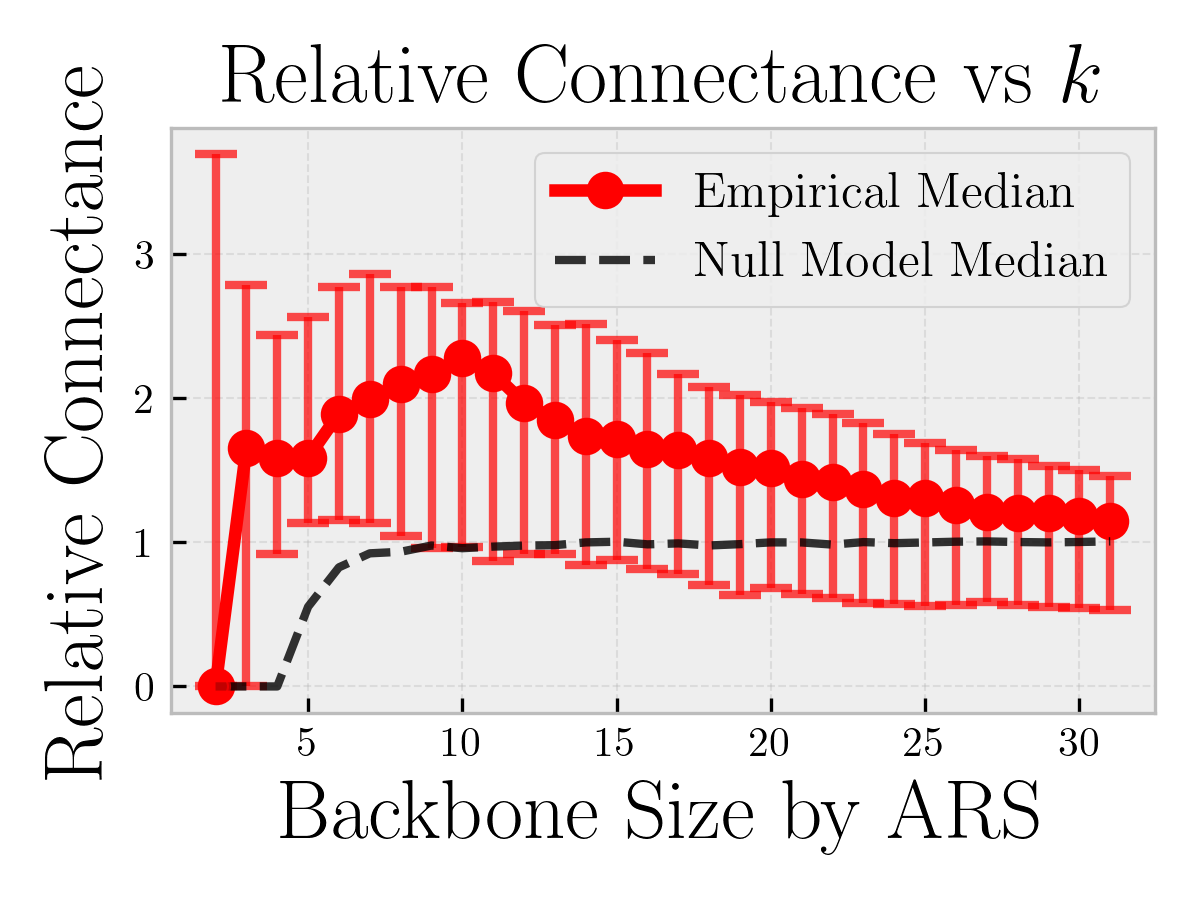}
    \caption{Relative connectance, the ratio of backbone connectance to that of the full food web.}
    \label{fig:backbone-relconn}
  \end{subfigure}\hfill
  \begin{subfigure}[b]{0.23\textwidth}
    \centering
    \includegraphics[width=1\linewidth]{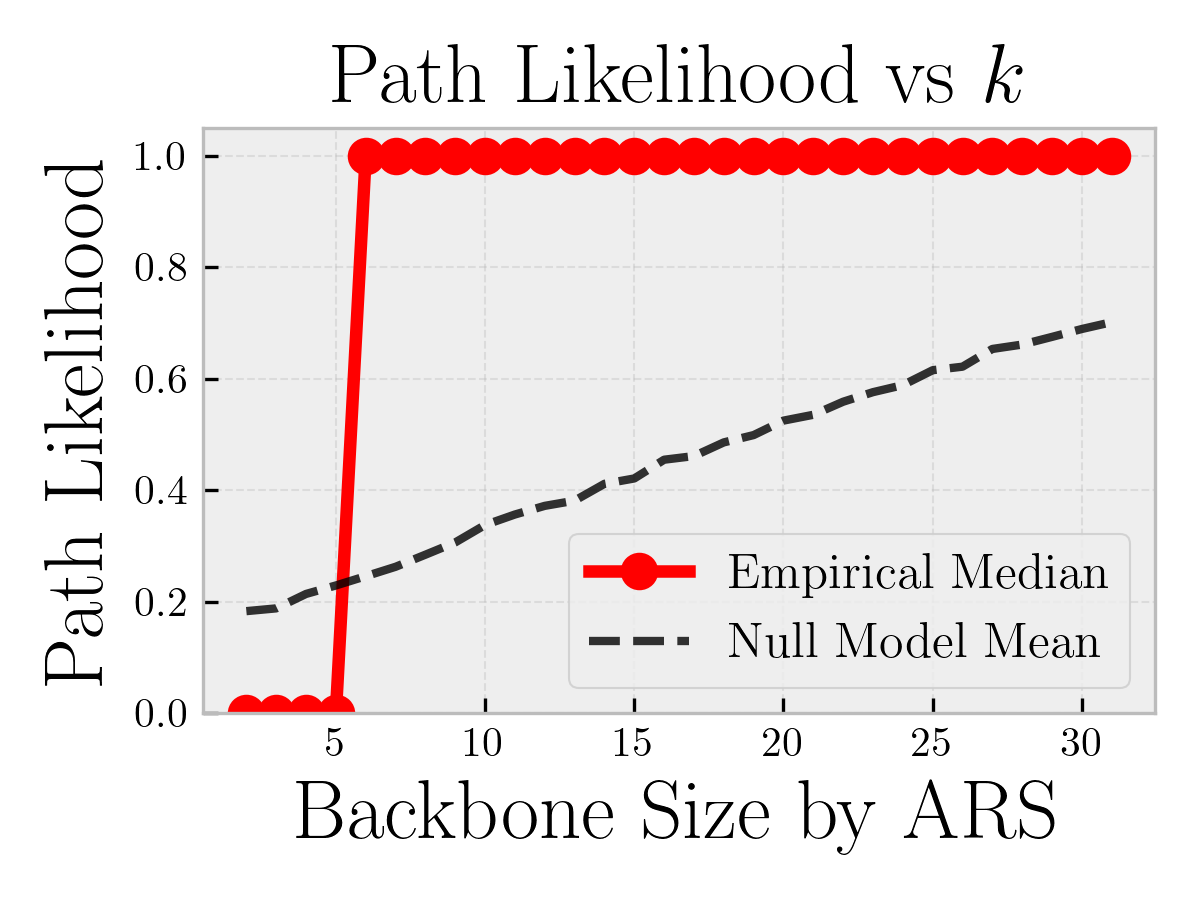}
    \caption{Path likelihood, the fraction of backbones that induce a connected subnetwork.}
    \label{fig:backbone-pathlik}
  \end{subfigure}\hfill \\
  \begin{subfigure}[b]{0.23\textwidth}
    \centering
    \includegraphics[width=0.99\linewidth]{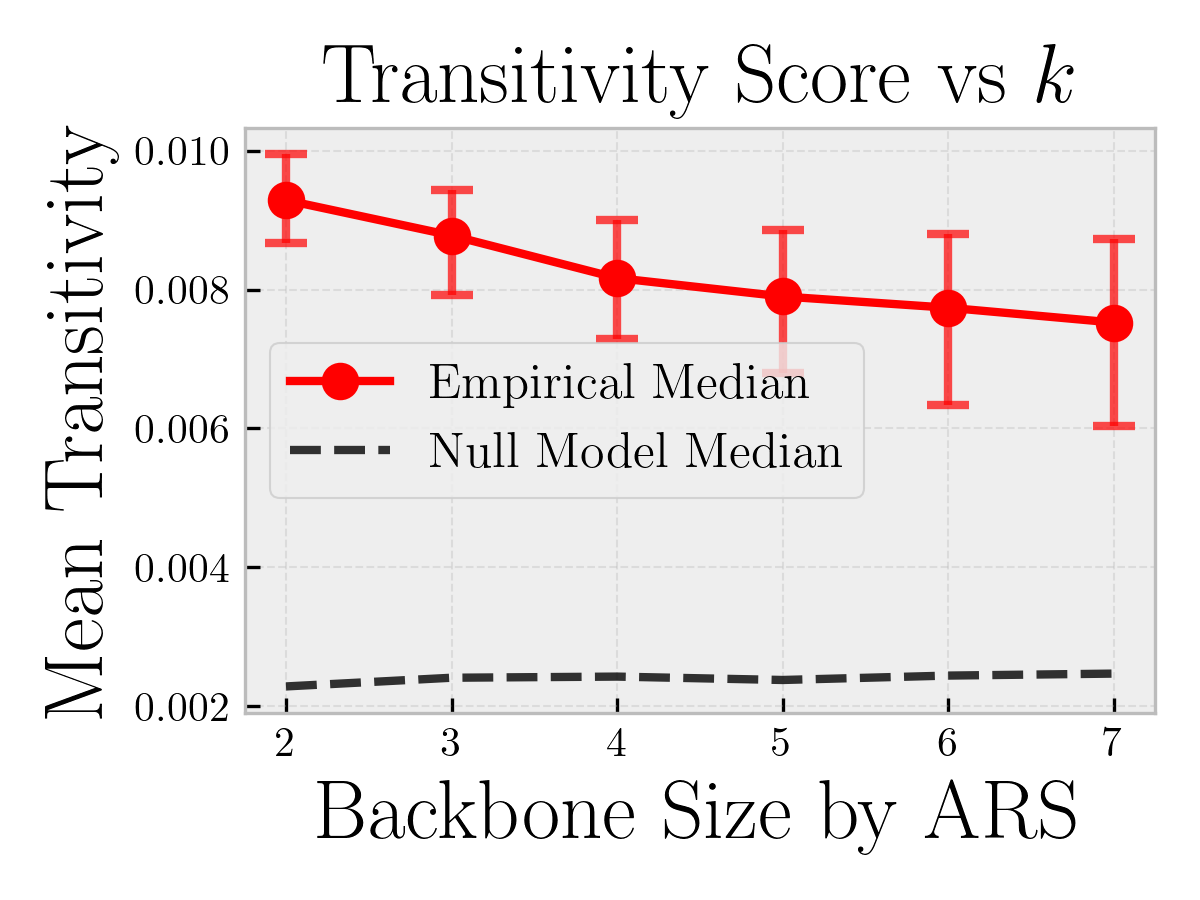}
    \caption{Transitivity, the non-deterministic alignment transitivity score.}
    \label{fig:backbone-trans}
  \end{subfigure}

  \caption{Structural properties of our top-$k$ backbones as a function of $k$.}
  \label{fig:backbone-exist}
  \vspace{-0.2cm}
\end{figure}

We next applied our method to a real-world ecological dataset, performing pairwise alignments between 129 mammal food webs across Sub-Saharan Africa, using the dataset compiled by~\cite{kai2025}, see \cite[Appendix~A.3]{xu2025foodwebs} for additional details. Non-deterministic alignments completed in $1.00\pm0.20$ seconds on average, substantially faster than deterministic alignments, which required $7.92\pm2.31$ seconds. Figure~\ref{fig:comparison} illustrates the alignment between the Cliff of Bandiagara and Namaqua networks. Both deterministic and non-deterministic approaches recover expected matches, such as \textit{Caracal caracal} in both networks and \textit{Leptailurus serval} (Bandiagara) with \textit{Felis silvestris} (Namaqua). However, the non-deterministic framework reveals additional ecological role redundancies by relaxing the one-to-one constraint. For instance, while the deterministic alignment maps \textit{Genetta genetta} (Namaqua) solely to \textit{Papio anubis} (Bandiagara), the non-deterministic alignment uncovers that \textit{Herpestes ichneumon} (Bandiagara) also occupies a similar ecological role. This example illustrates a general limitation of deterministic methods: by enforcing strict one-to-one pairings, they obscure many ecologically valid correspondences and arbitrarily discard alternatives when species roles partially overlap. Because backbones are derived from these alignments, such restrictions propagate to backbone identification and can yield incomplete or even misleading backbones. In contrast, our non-deterministic framework systematically captures all valid correspondences within a single reproducible solution, thereby producing backbones that more faithfully represent the redundancies and structural regularities present in ecological networks.

We extract the backbones of interactions, sets of interactions that persist across ecological networks, using our pairwise alignments. We followed the backbone definition in~\cite{bramon2018identifying} when applicable, and made extensions to non-deterministic alignments when necessary; see \cite[Appendix~A.4]{xu2025foodwebs} for details.

Figure \ref{fig:backbone-exist} summarizes three key properties of our top-k backbones: relative connectance, path likelihood, and transitivity. Each plot compares these properties against the corresponding averages (or medians) under a null model, in which 
$k$ species are randomly selected 50 times (for relative connectance and path likelihood) and 25 times (for transitivity score), without replacement, for each food web. Across all measures, the backbones revealed by our non-deterministic alignments consistently exhibited both higher connectance and transitivity than the null model, suggesting that their connectivity and transitivity are statistically significant. These findings show that our method yields backbones that consistently meet the backbone criteria proposed in \cite{bramon2018identifying}, while offering scalability and reproducibility.


\section{Conclusions and future work}

We introduced a non-deterministic alignment framework for identifying the backbone of interactions in ecological networks. Our method relaxes one-to-one role correspondences, allowing the capture of overlapping and redundant ecological roles. Our experiments primarily considered uniform distributions of the species importance parameters, $\mu$ and $\nu$. Future work should incorporate ecologically meaningful weights such as degree, biomass, or abundance. Parameter-free backbone constructions can be developed, such as forward backbones. Moreover, other criteria for species should be explored, such as the average entropy of their alignment distributions, which would highlight species with consistently unique roles across networks.

\newpage 

\section{Acknowledgments}
This work is funded by the National Science Foundation under Grants \#2213568 and \#2443064.

\bibliographystyle{ieeetr}
\bibliography{ref}

@misc{xu2025foodwebs,
  title        = {Identifying Common Backbones of Interactions Underlying Food Webs via Non-Deterministic Alignments},
  author       = {Yifan Xu and Carlos Taveras and Lydia Beaudrot and César A. Uribe},
  year         = {2026},
  howpublished = {Manuscript},
  note         = {Available at: \url{https://cauribe.rice.edu/xu_et_al_icassp2026_appendix/}}
}

@article{stouffer2012evolutionary,
  title={Evolutionary conservation of species’ roles in food webs},
  author={Stouffer, Daniel B and Sales-Pardo, Marta and Sirer, M Irmak and Bascompte, Jordi},
  journal={Science},
  volume={335},
  number={6075},
  pages={1489--1492},
  year={2012},
  publisher={American Association for the Advancement of Science}
}

@article{mora2018pymfinder,
  title={pymfinder: a tool for the motif analysis of binary and quantitative complex networks},
  author={Mora, Bernat Bramon and Cirtwill, Alyssa R and Stouffer, Daniel B},
  journal={BioRxiv},
  pages={364703},
  year={2018},
  publisher={Cold Spring Harbor Laboratory}
}

@article{thebault2010stability,
  title={Stability of ecological communities and the architecture of mutualistic and trophic networks},
  author={Th{\'e}bault, Elisa and Fontaine, Colin},
  journal={Science},
  volume={329},
  number={5993},
  pages={853--856},
  year={2010},
  publisher={American Association for the Advancement of Science}
}

@article{barnes2018energy,
  title={Energy flux: the link between multitrophic biodiversity and ecosystem functioning},
  author={Barnes, Andrew D and Jochum, Malte and Lefcheck, Jonathan S and Eisenhauer, Nico and Scherber, Christoph and O’Connor, Mary I and de Ruiter, Peter and Brose, Ulrich},
  journal={Trends in ecology \& evolution},
  volume={33},
  number={3},
  pages={186--197},
  year={2018},
  publisher={Elsevier}
}

@article{oliver2015biodiversity,
  title={Biodiversity and resilience of ecosystem functions},
  author={Oliver, Tom H and Heard, Matthew S and Isaac, Nick JB and Roy, David B and Procter, Deborah and Eigenbrod, Felix and Freckleton, Rob and Hector, Andy and Orme, C David L and Petchey, Owen L and others},
  journal={Trends in ecology \& evolution},
  volume={30},
  number={11},
  pages={673--684},
  year={2015},
  publisher={Elsevier}
}

@article{lurgi2012climate,
  title={Climate change impacts on body size and food web structure on mountain ecosystems},
  author={Lurgi, Miguel and Lopez, Bernat C and Montoya, Jos{\'e} M},
  journal={Philosophical Transactions of the Royal Society B: Biological Sciences},
  volume={367},
  number={1605},
  pages={3050--3057},
  year={2012},
  publisher={The Royal Society}
}

@article{dunne2002network,
  title={Network structure and biodiversity loss in food webs: robustness increases with connectance},
  author={Dunne, Jennifer A and Williams, Richard J and Martinez, Neo D},
  journal={Ecology letters},
  volume={5},
  number={4},
  pages={558--567},
  year={2002},
  publisher={Wiley Online Library}
}

@article{williams2000simple,
  title={Simple rules yield complex food webs},
  author={Williams, Richard J and Martinez, Neo D},
  journal={Nature},
  volume={404},
  number={6774},
  pages={180--183},
  year={2000},
  publisher={Nature Publishing Group UK London}
}

@article{stouffer2011role,
  title={The role of body mass in diet contiguity and food-web structure},
  author={Stouffer, Daniel B and Rezende, Enrico L and Amaral, Lu{\'\i}s A Nunes},
  journal={Journal of Animal Ecology},
  volume={80},
  number={3},
  pages={632--639},
  year={2011},
  publisher={Wiley Online Library}
}

@inproceedings{kantorovich1942translocation,
  title={On the translocation of masses},
  author={Kantorovich, Leonid V},
  booktitle={Dokl. Akad. Nauk. USSR (NS)},
  volume={37},
  pages={199--201},
  year={1942}
}

@book{villani2008optimal,
  title={Optimal transport: old and new},
  author={Villani, C{\'e}dric and others},
  volume={338},
  year={2008},
  publisher={Springer}
}

@article{tavella2022using,
  title={Using motifs in ecological networks to identify the role of plants in crop margins for multiple agriculture functions},
  author={Tavella, Julia and Windsor, Fredric M and Rother, D{\'e}bora C and Evans, Darren M and Guimaraes Jr, Paulo R and Palacios, Tania P and Lois, Marcelo and Devoto, Mariano},
  journal={Agriculture, Ecosystems \& Environment},
  volume={331},
  pages={107912},
  year={2022},
  publisher={Elsevier}
}

@article{baiser2016motifs,
  title={Motifs in the assembly of food web networks},
  author={Baiser, Benjamin and Elhesha, Rasha and Kahveci, Tamer},
  journal={Oikos},
  volume={125},
  number={4},
  pages={480--491},
  year={2016},
  publisher={Wiley Online Library}
}

@article{paulau2015motif,
  title={Motif analysis in directed ordered networks and applications to food webs},
  author={Paulau, Pavel V and Feenders, Christoph and Blasius, Bernd},
  journal={Scientific reports},
  volume={5},
  number={1},
  pages={11926},
  year={2015},
  publisher={Nature Publishing Group UK London}
}

@article{cenci2018rethinking,
  title={Rethinking the importance of the structure of ecological networks under an environment-dependent framework},
  author={Cenci, Simone and Song, Chuliang and Saavedra, Serguei},
  journal={Ecology and evolution},
  volume={8},
  number={14},
  pages={6852--6859},
  year={2018},
  publisher={Wiley Online Library}
}

@article{milo2002network,
  title={Network motifs: simple building blocks of complex networks},
  author={Milo, Ron and Shen-Orr, Shai and Itzkovitz, Shalev and Kashtan, Nadav and Chklovskii, Dmitri and Alon, Uri},
  journal={Science},
  volume={298},
  number={5594},
  pages={824--827},
  year={2002},
  publisher={American Association for the Advancement of Science}
}

@article{Memoli2011, title={Gromov-Wasserstein Distances and the Metric Approach to Object Matching}, volume={11}, rights={http://www.springer.com/tdm}, ISSN={1615-3375, 1615-3383}, DOI={10.1007/s10208-011-9093-5}, number={4}, journal={Foundations of Computational Mathematics}, author={M{\'e}moli, Facundo}, year={2011}, month=aug, pages={417-487}, language={en} }

@article{vayer2020,
AUTHOR = {Vayer, Titouan and Chapel, Laetitia and Flamary, Remi and Tavenard, Romain and Courty, Nicolas},
TITLE = {Fused Gromov-Wasserstein Distance for Structured Objects},
JOURNAL = {Algorithms},
VOLUME = {13},
YEAR = {2020},
NUMBER = {9},
ARTICLE-NUMBER = {212},
URL = {https://www.mdpi.com/1999-4893/13/9/212},
ISSN = {1999-4893},
DOI = {10.3390/a13090212}
}

@article{kai2025,
author = {Hung, Kai M. and Beaudrot, Lydia and Finneran, Ann E. and Zalles, Alex G. and Uribe, César A.},
title = {Quantifying functionally equivalent species and ecological network dissimilarity with optimal transport distances},
journal = {Methods in Ecology and Evolution},
volume = {n/a},
number = {n/a},
pages = {},
year={2025},
doi = {https://doi.org/10.1111/2041-210x.70130},
url = {https://besjournals.onlinelibrary.wiley.com/doi/abs/10.1111/2041-210x.70130},
eprint = {https://besjournals.onlinelibrary.wiley.com/doi/pdf/10.1111/2041-210x.70130}
}

@article{li2023convergent,
  title={A convergent single-loop algorithm for relaxation of Gromov-Wasserstein in graph data},
  author={Li, Jiajin and Tang, Jianheng and Kong, Lemin and Liu, Huikang and Li, Jia and So, Anthony Man-Cho and Blanchet, Jose},
  journal={arXiv preprint arXiv:2303.06595},
  year={2023}
}

@article{bramon2018identifying,
  title={Identifying a common backbone of interactions underlying food webs from different ecosystems},
  author={Bramon Mora, Bernat and Gravel, Dominique and Gilarranz, Luis J and Poisot, Timoth{\'e}e and Stouffer, Daniel B},
  journal={Nature Communications},
  volume={9},
  number={1},
  pages={2603},
  year={2018},
  publisher={Nature Publishing Group UK London}
}

\clearpage
\appendix
\section{Supplementary materials}

\subsection{The optimization algorithm for non-deterministic alignment} \label{app:algo}

\textbf{Problem setup and splitting: }Here, we use the split variables $\pi$ and $w$ to derive Algorithm~\ref{alg:full-opt}; in the main text, both are represented by the single alignment variable $T$. 

\noindent Recall the non-deterministic alignment problem (Problem \ref{eq:optimal non-det alignment}) with inequality marginals
$\mathcal{C}_1 := \{\,\pi \in [0, 1]^{m\times n} \;:\; \pi \mathbbm{1}_n \preceq \mu\,\}$ and
$\mathcal{C}_2 := \{\,w \in [0, 1]^{m\times n} \;:\; w^\top \mathbbm{1}_m \preceq \nu\,\}$.  Similar to the splitting idea presented in \cite{li2023convergent},
we rewrite cost function as follows:
\begin{align*}
    g(\pi,w) := & \alpha\langle \pi, A_1 (C \odot w) A_2\rangle
    + 
    (1-\alpha) \langle C, \frac{1}{2}(\pi + w)\rangle
    - \\
    & \epsilon \langle \frac{1}{2}(\pi + w), \mu \nu^\top\rangle
\end{align*}
so Problem \ref{eq:optimal non-det alignment} can be rewritten as:
\begin{equation}
    \min_{\pi=w} g(\pi, w) + \mathbb{I}_{\mathcal{C}_1}(\pi) + \mathbb{I}_{\mathcal{C}_2}(w)
\end{equation}
where $\mathbb{I}_\mathcal{C}$ denotes the indicator function for the constraint set $\mathcal{C}$. Then, we can penalize the equality constraint in the cost function and construct the penalized cost function:
\begin{equation}
\label{eq:penalized-cost}
\begin{aligned}
    & \min_{\pi=w} g(\pi, w) + \mathbb{I}_{\mathcal{C}_1}(\pi) + \mathbb{I}_{\mathcal{C}_2}(w) + \gamma D_h(\pi,w) \\
    & D_h(X,Y) := \sum_{ij}\Big[ x_{ij}\log \frac{x_{ij}}{y_{ij}} - x_{ij} + y_{ij}\Big]
\end{aligned}
\end{equation}
where $D_h$ is the Bregman divergence induced by the negative entropy
$h(X)=\sum_{ij}x_{ij}\log x_{ij}$, and $\gamma>0$.

\noindent \textbf{KL-Bregman Alternating Projected Gradient \cite{li2023convergent}
: }At iteration $k$, we alternate between linearizing $g$ in each block and solve a KL–proximal step subproblem:
\begin{equation} \label{eq:update-steps}
\begin{aligned}
    \pi^{k+1}
&= \arg\min_{\pi\in\mathcal{C}_1} \Big\langle Q^{(k)},\pi\Big\rangle + \gamma  D_h(\pi, w^{k}) \\ 
w^{k+1} &= \arg\min_{w\in\mathcal{C}_2}
\Big\langle Q^{'(k)}, w\Big\rangle + \gamma D_h(w, \pi^{k+1}),
\end{aligned}
\end{equation}
where the cost surrogates $Q^{(k)}$ and $Q^{'(k)}$ are defined as:
\begin{align*}
    &Q^{(k)} = \alpha A_1(C\odot w^{k})A_2 + \frac{1}{2}(1-\alpha)C - \frac{1}{2}\epsilon \mu\nu^\top \\
    &Q^{'(k)} = \alpha\, C \odot (A_1\pi^{k+1} A_2) + \frac{1}{2}(1-\alpha)C - \frac{1}{2}\epsilon\,\mu\nu^\top.
\end{align*}
Each subproblem has the same form: minimize a linear term plus a Bregman divergence to the previous iteration, under nonnegativity and a \textbf{single} set of marginal constraints.

\noindent\textbf{Dual of the KL–prox subproblem and optimizer: }We write the generic subproblem as
\begin{equation}\label{eq:subproblem}
\begin{aligned}
    &\min_{X\in\mathcal{C}'} \langle Q, X\rangle + \gamma D_h(X,Y) 
\end{aligned}
\end{equation}
where $\mathcal{C}'$ is either $\mathcal{C}_1$ or $\mathcal{C}_2$.
First, consider \eqref{eq:subproblem} with row constraints $\mathcal{C}_1=\{X \in [0, 1]^{m\times n} : X\mathbbm{1}\preceq \mu\}$. Introducing dual variables $\lambda\in\mathbb{R}^m_{\ge 0}$, the Lagrangian is
\begin{equation}
    \mathcal{L}(X,\lambda) = \big\langle Q, X\big\rangle + \gamma D_h(X,Y) + \langle \lambda, X\mathbbm{1}-\mu\rangle
\end{equation}
Minimizing w.r.t. $X$ entrywise yields, for any $x_{ij}>0$:
\begin{align*}
    0 &= \frac{\partial \mathcal{L}}{\partial x_{ij}}
    = Q_{ij} + \gamma \log\!\frac{x_{ij}}{y_{ij}} + \lambda_i \\
    \Longrightarrow \ \ 
    x_{ij} &= y_{ij}\,\exp\!\Big(-\tfrac{Q_{ij}+\lambda_i}{\gamma}\Big)
\end{align*}
If we denote $\hat{X}_{ij} = y_{ij}\exp\Big(-\tfrac{Q_{ij}}{\gamma}\Big)$ as the unconstrained optimizer, then the primal optimizer can be written as:
\begin{equation}
X^\star_{ij} = y_{ij}\exp\Big(-\tfrac{Q_{ij}}{\gamma}\Big)\exp\Big(-\tfrac{\lambda_i^\star}{\gamma}\Big) = \hat{X}_{ij}\exp\Big(-\tfrac{\lambda_i^\star}{\gamma}\Big),
\end{equation}
with complementary slackness $\lambda_i^\star\big((X^\star \mathbbm{1})_i-\mu_i\big)=0$.
An analogous derivation with column constraints $\mathcal{C}_2=\{X\ge 0, X^\top\mathbbm{1}\preceq \nu\}$
introduces $\eta\in\mathbb{R}^n_{\ge 0}$ and yields
\begin{equation}
    X^\star_{ij} = y_{ij}\exp\Big(-\tfrac{Q'_{ij}}{\gamma}\Big)\exp\Big(-\tfrac{\eta_j^\star}{\gamma}\Big),\quad \eta_j^\star\big((X^{\star\top}\!\mathbbm{1})_j-\nu_j\big)=0.
\end{equation}

\noindent \textbf{Enforcing constraints with Dykstra-style projections:}
The structure of the Lagrangian optimizer shows that enforcing $\pi \mathbbm{1} \preceq \mu$ (or $w^\top \mathbbm{1} \preceq \nu$) is equivalent to scaling each row $i$ (or column $j$) by a factor $\exp\Big(-\tfrac{\lambda_i}{\gamma}\Big)$ (or $\exp\Big(-\tfrac{\eta_i}{\gamma}\Big)$). Complementary slackness implies that:
\begin{itemize}
    \item if the current row sum $(\hat{X}\mathbbm{1})_i \leq \mu_i$(or column sum $(\hat{X}^\top\mathbbm{1})_j \leq \nu_j$), then $\lambda_i^* = 0$ (or $\eta_i^* = 0$) and we leave the row unchanged,
    \item if $(\hat{X}\mathbbm{1})_i \geq \mu_i$ (or $(\hat{X}^\top\mathbbm{1})_j \geq \nu_j$), then $\lambda_i^* > 0$ (or $\eta_i^* > 0$) and the row is uniformly scaled until it satisfies the constraint.
\end{itemize}
Therefore, the Dykstra projection steps onto the constraint sets are simply
\begin{align*}    
    P_{\mathcal{C}_1} &= \text{diag}\left(\min\left(\frac{\mu}{\hat{X}\mathbbm{1}}, \mathbbm{1}\right)\right)\hat{X} \\ \quad P_{\mathcal{C}_2} &= \hat{X}\text{diag}\left(\min\left(\frac{\nu}{\hat{X}^\top\mathbbm{1}}, \mathbbm{1}\right)\right)
\end{align*}

\noindent \textbf{Summary: }Now, we are ready to present Algorithm \ref{alg:full-opt}. In short, we
\begin{enumerate}
    \item linearize cost surrogate around one operator: $Q^{(k)} \leftarrow \alpha A_1 (C \odot T^{(k-1)})A_2 + \frac{1}{2}(1- \alpha)C - \frac{1}{2}\epsilon \nu \mu^\top$\;
    \item do a multiplicative update with step size $\gamma$: $\hat{T} \leftarrow T^{(k-1)} \odot \exp\left( - \frac{Q^{(k)}}{\gamma}\right)$
    \item project iterate back into $\mathcal{C}_1$: row\_factor $\leftarrow \min(\frac{\mu}{\hat{T}\mathbbm{1}_n}, \mathbbm{1}_m)$ and $\hat{T} \leftarrow \text{diag}(\text{row\_factor})\hat{T}$\;
    \item repeat for the other operator and the column constraints, until a stopping criteria is met.
\end{enumerate}

\begin{algorithm}
    \caption{The KL Proximal Point Algorithm with Dykstra Subroutine}\label{alg:full-opt}
    \SetKwInOut{Input}{Input}
    \SetKwInOut{Output}{Output}

    \Input{
        - the adjacency matrices for the underlying undirected graphs $A_1, A_2$ \\
        - the dissimilarity matrix $C$ \\
        - the tradeoff parameter $\alpha \in [0, 1]$ \\
        - the self-alignment penalty $\epsilon > 0$ \\
        - the step size $\gamma > 0$ \\
        - the normalized species significance distributions $\mu, \nu$ \\
        - some additional stopping criteria for main loop (step tolerance, etc.)
    }
    \Output{$T$, the non-deterministic alignment between $G_1$ and $G_2$}
    $T^{(0)} \leftarrow \frac{1}{mn}\mathbbm{1}_m\mathbbm{1}_n^\top$\;
    \ForEach{$k = 1, 2, \cdots, \text{max\_iter}$}{
        $Q^{(k)} \leftarrow \alpha A_1 (C \odot T^{(k-1)})A_2 + \frac{1}{2}(1- \alpha)C - \frac{1}{2}\epsilon \nu \mu^\top$\;
        $\hat{T} \leftarrow T^{(k-1)} \odot \exp\left( - \frac{Q^{(k)}}{\gamma}\right)$\;
        
        row\_factor $\leftarrow \min(\frac{\mu}{\hat{T}\mathbbm{1}_n}, \mathbbm{1}_m)$\;
        $\hat{T} \leftarrow \text{diag}(\text{row\_factor})\hat{T}$\;
        $Q^{(k)'} \leftarrow \alpha C \odot (A_1 \hat{T} A_2) + \frac{1}{2}(1 - \alpha)C - \frac{1}{2}\epsilon \nu\mu^\top$\;
        $\hat{T} \leftarrow \hat{T} \odot \exp(-\frac{Q^{(k)'}}{\gamma})$\;
        col\_factor $\leftarrow \min(\frac{\nu}{\hat{T}^\top\mathbbm{1}_m}, \mathbbm{1}_n)$\;
        $T^{(k)} \leftarrow \hat{T}\text{diag}(\text{col\_factor})$\;
        Check additional stopping criteria for the main loop.\;
    }
    \Return $T^{(k)}$\;
\end{algorithm}

\subsection{Proof of Proposition~\ref{prop:convergence}} \label{app:proof-conv}
We first write the split form of~\ref{eq:optimal non-det alignment} as seen in Appendix~\ref{app:algo}:
\begin{align}
\label{eq:split-form}
    \min_{\pi = w} &g(\pi, w) + \mathbb{I}_{\mathcal{C}_1}(\pi) + \mathbb{I}_{\mathcal{C}_2}(w), \\
    \text{where }&\mathcal{C}_1 = \{X \in [0, 1]^{m \times n}: X \mathbbm{1} \preceq \mu\}, \notag \\
    & \mathcal{C}_2 = \{X \in [0, 1]^{m \times n} : X^\top \mathbbm{1} \preceq \nu\}. \notag
\end{align}
Let $h(X) = \sum_{ij} x_{ij} \log x_{ij}$ be the negative entropy function and $D_h$ be its Bregman divergence. Let $\{(\pi^{(k)}, w^{(k)})\}_{k\geq0}$ be the sequence generated by alternatively solving the subproblems in~\ref{eq:update-steps}. 
\noindent We define the potential function:
\begin{equation}
    F_\gamma(\pi, w) = g(\pi, w) + \mathbb{I}_{\mathcal{C}_1}(\pi) + \mathbb{I}_{\mathcal{C}_2}(w) + \gamma D_h(\pi, w).
\end{equation}

\begin{lemma}\label{lem:1}
    The following properties hold:
    \begin{enumerate}
        \item $f(\pi, w) = \alpha \langle \pi, A_1 (C \odot w) A_2 \rangle$ is bilinear;
        \item $\mathcal{C}_1$ and $\mathcal{C}_2$ are closed, convex polytopes;
        \item the potential function $F_\gamma$ is coercive.
    \end{enumerate}
\end{lemma}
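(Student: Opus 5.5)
The three claims are independent, and I would verify each directly from the definitions in Appendix~\ref{app:algo}.

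\emph{Bilinearity.} For fixed $w$, the map $\pi \mapsto \alpha\langle \pi, A_1(C\odot w)A_2\rangle$ is an inner product of $\pi$ against a fixed matrix, so it is linear in $\pi$. For fixed $\pi$, the map $w \mapsto A_1(C\odot w)A_2$ is a composition of the Hadamard product with the constant $C$ and left/right multiplication by $A_1, A_2$. All of these are linear, so $f$ is linear in $w$. It is useful to write the expansion explicitly,
\[
f(\pi,w)=\alpha\sum_{i,j,k,l}(A_1)_{ik}\,C_{kl}\,(A_2)_{lj}\,\pi_{ij}\,w_{kl},
\]
since this exhibits $f$ as a quadratic form $\pi^\top M w$ in the vectorized variables. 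Consequently $\nabla f$ is Lipschitz, which the later descent argument will need.

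\emph{Polytopes.} The set $\mathcal{C}_1$ is the intersection of the box $[0,1]^{m\times n}$, which consists of the $2mn$ halfspaces $0\le x_{ij}\le 1$, with the $m$ halfspaces $\sum_j x_{ij}\le \mu_i$. A finite intersection of closed halfspaces is a closed convex polyhedron. It is bounded because it sits inside the box, so it is a polytope. The same argument applies to $\mathcal{C}_2$, using the column constraints instead.

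\emph{Coercivity.} This is the only point needing care. The effective domain of $F_\gamma$ is contained in $\mathcal{C}_1\times\mathcal{C}_2$, which is compact. Outside it, $F_\gamma=+\infty$, so every sublevel set $\{F_\gamma\le t\}$ lies in this compact set and is therefore bounded. That is exactly coercivity: $F_\gamma(\pi,w)\to\infty$ as $\|(\pi,w)\|\to\infty$.

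For a stronger statement I would also show that $F_\gamma$ is bounded below on its domain. The function $g$ is continuous on the compact set $\mathcal{C}_1\times\mathcal{C}_2$, so it is bounded there, and $D_h\ge 0$. One subtlety remains. $D_h(\pi,w)$ can equal $+\infty$ when $w_{ij}=0<\pi_{ij}$, and it is continuous only in the extended sense. This does not affect boundedness of the sublevel sets. If lower semicontinuity is also wanted, I would note that $x\log(x/y)-x+y$ is jointly convex and lower semicontinuous on $\mathbb{R}_{\ge0}^2$ under the conventions $0\log 0=0$ and $x\log(x/0)=+\infty$ for $x>0$. The indicator functions of closed sets are also lower semicontinuous, so $F_\gamma$ is proper, lower semicontinuous and coercive.
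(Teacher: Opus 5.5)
Your proposal is correct and follows essentially the same route as the paper: bilinearity from the linearity of the inner product, the Hadamard product and matrix multiplication; closed convex polytopes as bounded finite intersections of closed halfspaces; and coercivity because $F_\gamma=+\infty$ off the bounded set $\mathcal{C}_1\times\mathcal{C}_2$. Your additional remarks go beyond what the paper's proof establishes but are accurate. These are the explicit expansion of $f$, the Lipschitz gradient, the lower bound on $F_\gamma$, and lower semicontinuity via the joint lower semicontinuity of the KL integrand.
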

\begin{proof}[Proof of Lemma~\ref{lem:1}]
    The three statements are proved in parallel:
    \begin{enumerate}
        \item This follows immediately due to the linearity of inner products, matrix multiplication, and the Hadamard product.
        \item \textbf{Closedness: }Both inequality constraints $X \mathbbm{1} \preceq \mu$ (resp. $X^\top \mathbbm{1} \preceq \nu$) can be represented as an intersection of closed half-spaces. Since $[0, 1]^{ m\times n}$ is a product of closed intervals which is also closed, we have that the intersection of closed sets is closed; hence, $\mathcal{C}_1$ and $\mathcal{C}_2$ are closed. \\
        \textbf{Convexity: }Both the unit box $[0, 1]^{m \times n}$ and the linear inequalities are convex, and hence their intersection is convex. \\
        \textbf{Polytope: }Both $C_1$ and $C_2$ are bounded by the unit box $[0, 1]^{m \times n}$. In addition, they are both given by finitely many linear inequalities, so they are polyhedra. Since they are both bounded polyhedra, by definition, they are polytopes.
        \item Since both $C_1$ and $C_2$ are bounded, $\mathcal{C}_1 \times \mathcal{C}_2$ is bounded. Hence, whenever $\|(\pi, w)\| \rightarrow \infty$, we must have that $(\pi, w) \notin \mathcal{C}_1 \times \mathcal{C}_2$, which implies either $\pi \notin \mathcal{C}_1$ or $w \notin \mathcal{C}_2$. In the former case, we have that $\mathbb{I}_{\mathcal{C}_1}(\pi) \rightarrow \infty$, and in the latter case we have that $\mathbb{I}_{\mathcal{C}_2}(w) \rightarrow \infty$, both of which implies $F_\gamma(\pi, w) \rightarrow \infty$, as desired.
    \end{enumerate}
\end{proof}

\begin{proof}[Proof of Proposition~\ref{prop:convergence}]
    Since the accumulative asymmetrical error is bounded and the statements in Lemma~\ref{lem:1} hold, then by \cite[Theorem 3.6]{li2023convergent}, every limit point of the sequence $\{(\pi^{(k)}, w^{(k)})\}_{k \geq 0}$ generated by~\ref{eq:update-steps} belongs to the fixed point set of the Bregman Alternating Projected Gradient. Hence, Algorithm~\ref{alg:full-opt} generates a sequence $\{T^{(k)}\}_{\geq0}$ that converges to a stationary point of problem~\ref{eq:split-form}, and therefore a local minimizer of~\ref{eq:optimal non-det alignment}. 
\end{proof}

\noindent \textbf{Notes on the accumulative asymmetrical error (AAE): }The AAE is defined as:
\begin{equation*}
    \sum_{k=0}^\infty \left(D_h(\pi^{(k+1)}, w^{(k)}) - D_h(w^{(k)}, \pi^{(k+1)})\right)
\end{equation*}
Although it is difficult to theoretically show that the AAE is bounded, previous work has shown that it is often the case in practice \cite[Figure 3]{li2023convergent}. We observe that the AAE is also often bounded in practice.

\subsection{Proof of Proposition~\ref{prop:det2}} \label{app:proof-det}
Before we commence with the proof, we provide definitions for deterministic alignments and the corresponding optimal deterministic alignment problem.

\noindent \textbf{Deterministic alignment: } Deterministic alignment (as defined in \cite{bramon2018identifying}) can be formulated as a stricter special case to our alignment model, where $T \in \{0, 1\}^{m \times n}$, and $\mu = \mathbbm{1}_{m}$ and $\nu = \mathbbm{1}_{n}$:
\begin{definition}[Feasible Deterministic Alignment] \label{def:det alignment}
    Given two networks $G_1, G_2$, a \textbf{deterministic alignment} is a binary matrix $T \in \{0,1\}^{m \times n} \cap C_1(\mathbbm{1}_m) \cap C_2(\mathbbm{1}_n)$. $T_{ij} = 1$ indicates that the alignment identifies the ecological role of species $i$ and $j$ to be similar, and $T_{ij} = 0$ otherwise.
\end{definition} 

\begin{definition}[Optimal Deterministic Alignment] \label{def:optimal det alignment}
Under the same assumptions in Def.~\ref{def:non-det alignment} and with a choice of the self-penalty matrix $\Xi(T)$, an \textbf{optimal deterministic alignment} is a solution of
\begin{align}\label{eq:det alignment}
    \min_{T \in \{0,1\}^{m \times n}} 
    \ & \alpha \langle T, A_1(C \odot T)A_2 \rangle 
    + (1-\alpha)\langle C, T \rangle 
    + \langle T, \Xi(T)\rangle\\
    \text{s.t. }& \ T \in \mathcal{C}_1(\mathbbm{1}_m) \cap \mathcal{C}_2(\mathbbm{1}_n) \notag
\end{align}
\end{definition}

\begin{proof}[Proof of Proposition~\ref{prop:det2}]
Let $G_1 = (V_1,E_1)$ and $G_2 = (V_2,E_2)$ have underlying undirected adjacency matrices $A_1, A_2$, and let $C \in \mathbb{R}^{m \times n}$ be the cost matrix (with discrepancy $1 - \rho(i, j)$. In our deterministic model (Def. \ref{def:det alignment}, \ref{def:optimal det alignment}), a feasible $T \in \{0,1\}^{m \times n} \cap \mathcal{C}_1(\mathbbm{1}_m) \cap \mathcal{C}_2(\mathbbm{1}_n)$ encodes a deterministic matching between $V_1$ and $V_2$. Define the alignment set
\begin{align*}
    \lambda(T) = &\{(i, j) \in V_1 \times V_2: T_{ij} = 1\} 
    \ \cup \\ & 
    \{(i, \emptyset) \in V_1 \times \emptyset: \sum_j T_{ij} = 0\} 
    \ \cup \\ &
    \{(\emptyset, j) \in \emptyset \times V_2: \sum_i T_{ij} = 0\}
\end{align*}
Conversely, any alignment $\lambda$ as defined in \cite{bramon2018identifying} induces such a feasible $T$.

\noindent We prove that when $\alpha = 1$ and
\begin{equation} \label{eq:penalty}
        \Xi(T) = (1-\epsilon)\left(\max \left(A_1 \mathbbm{1}_m \mathbbm{1}_n^\top, \mathbbm{1}_m \mathbbm{1}_n^\top A_2\right) - A_1 T A_2\right),
    \end{equation}
the cost function in~\ref{eq:optimal det alignment} is equal to \cite[Eq. (3)]{bramon2018identifying}, so that their minimizers coincide.
First, with $\alpha = 1$, the cross-alignment terms of~\ref{eq:optimal det alignment} collapses to
\begin{equation*}
    \langle T, A_1(C \odot T)A_2 \rangle = \sum_{i,j} T_{ij} \sum_{\hat{i} \in \mathcal{N}_{G_1}(i)} \sum_{\hat{j} \in \mathcal{N}_{G_2}(j)} C_{\hat{i}\hat{j}} T_{\hat{i} \hat{j}},
\end{equation*}
where $\mathcal{N}_G(i)$ denote the set of 1-hop neighbors of $i$ in $G$. For each aligned pair of species $x = (i, j) \in \lambda(T)$, the inner double sum ranges over the set of neighbor pairings $\lambda_x := \{(\alpha, \beta) \in \lambda(T): \alpha \in \mathcal{N}_{G_1}(i), \beta \in \mathcal{N}_{G_2}(j)\}$. Thus 
\begin{align} 
\langle T, A_1(C \odot T)A_2 \rangle   
&= 
\sum_{x \in \lambda(T)}
\sum_{(\alpha, \beta) \in \lambda_x} C_{\alpha \beta} \notag
\\ &= 
\sum_{x \in \lambda(T)} \sum_{(\alpha, \beta) \in \lambda_x} 1 - \rho(\alpha, \beta) \label{eq:cross-alignment}
\end{align}
which is exactly the cross-alignment term in \cite[Eq. (3)]{bramon2018identifying}.

Next, for some alignment $x = (i, j) \in \lambda(T)$, write the ``matched neighbor count" as
\begin{equation*}
    (A_1 T A_2)_{ij} = |\{(\alpha, \beta) \in \lambda_x\}|.
\end{equation*}
Then, the penalty matrix in~\ref{eq:penalty} gives, entrywise,
\begin{equation*}
    \Xi_{ij}(T) = (1-\epsilon) \left( \max\big((A_1 \mathbbm{1}_m)_i, (A_2 \mathbbm{1}_n)_j\big) - (A_1 T A_2)_{ij}\right).
\end{equation*}
If $T_{ij} = 1$, then the term $\max\big((A_1 \mathbbm{1}_m)_i, (A_2 \mathbbm{1}_n)_j\big) - (A_1 T A_2)_{ij}$ represents ``the number of unmatched neighbors on the higher‑degree side of the alignment," i.e., $\max(k_x^\alpha, k_x^\beta)$ in \cite[Alignment algorithm section of Supplementary Methods]{bramon2018identifying}. Therefore,
\begin{equation} \label{eq:self-penalty}
    \langle T, \Xi(T)\rangle = \sum_{x \in \lambda(T)}(1 - \epsilon)\max(k_x^\alpha, k_x^\beta) = \sum_{x \in \lambda(T)} \xi_x.
\end{equation}
Combining Equations~\ref{eq:cross-alignment} and ~\ref{eq:self-penalty}, we have
\begin{align*}
    &\langle T, A_1(C \odot T)A_2 \rangle + \langle T, \Xi(T)\rangle \\
    &= \sum_{x \in \lambda(T)} \sum_{(\alpha, \beta) \in \lambda_x} 1 - \rho(\alpha, \beta) + \sum_{x \in \lambda(T)} \xi_x \\
    &= \sum_{x \in \lambda(T)} \left( \sum_{(\alpha, \beta) \in \lambda_x} (1 - \rho(\alpha, \beta)) +  \xi_x\right),
\end{align*}
as desired.
\end{proof}

\end{document}